\title{\LARGE \bf
Directly Coupled 
Observers for Quantum Harmonic Oscillators with Discounted Mean Square Cost  Functionals 
and Penalized Back-action$^*$}
\author{Igor G. Vladimirov$^{\dagger}$, \qquad Ian R. Petersen$^{\dagger}$%: \today, \currenttime%\\
%\today, \currenttime% <-this % stops a space
\thanks{$^*$This work is supported by the Australian Research Council.}
\thanks{$^\dagger$UNSW Canberra, Australia.
{\tt igor.g.vladimirov@gmail.com}, {\tt i.r.petersen@gmail.com}.}
}
\DeclareMathAlphabet{\bit}{OML}{cmm}{b}{it}
\newtheorem{lem}{Lemma}
\newtheorem{thm}{Theorem}
\def\ad{\mathrm{ad}}           % nice less than or equal to sign
\def\<{\leqslant}           % nice less than or equal to sign
\def\>{\geqslant}           % nice larger than or equal to sign
\def\d{\partial}
\def\wt{\widetilde}
\def\Re{\mathrm{Re}}   % real part
\def\Im{\mathrm{Im}}   % imaginary part
\def\cH{\mathcal{H}}   % Hardy space
\def\mA{\mathbb{A}}    % space of real antisymmetric matrices
\def\mR{\mathbb{R}}    % real line
\def\mC{\mathbb{C}}    % complex plane
\def\Tr{\mathrm{Tr}}       % matrix trace
\def\rT{\mathrm{T}}        % matrix transpose
\def\bS{\mathbf{S}}
\def\bE{\mathbf{E}}    % expectation
\def\rprod{\mathop{\overrightarrow{\prod}}}
\def\bra{{\langle}}
\def\ket{{\rangle}}
\def\Bra{\left\langle}
\def\Ket{\right\rangle}
\def\re{\mathrm{e}}        % number e
\def\rd{\mathrm{d}}        % differential
\def\sj{\mathsf{j}}
\def\br{\mathbf{r}}
\def\x{\times}
\def\ox{\otimes}
\def\cZ{{\mathcal Z}}
\def\cX{\mathcal{X}}
\def\cK{\mathcal{K}}
\def\cC{\mathcal{C}}
\def\cI{\mathcal{I}}
\def\cP{\mathcal{P}}
\def\cQ{\mathcal{Q}}
\def\cA{\mathcal{A}}
\def\cE{\mathcal{E}}
\def\bL{\mathbf{L}}
\def\mH{\mathbb{H}}
\def\mS{\mathbb{S}}
\def\diag{\mathop{\mathrm{diag}}}    % diagonal matrix
\begin{document}

\maketitle
\thispagestyle{empty}
\pagestyle{plain}

\begin{abstract}
This paper is concerned with quantum harmonic oscillators consisting of a quantum plant and a directly coupled coherent quantum observer. We employ discounted quadratic performance criteria in the form of exponentially weighted time averages of second-order moments of the system variables. A coherent quantum filtering (CQF) problem is formulated as the minimization of the discounted mean square of an estimation error, with which the dynamic variables of the observer approximate those of the plant. The cost functional also involves a quadratic penalty on the plant-observer coupling matrix in order to mitigate the back-action of the observer on the covariance dynamics of the plant. For the discounted mean square optimal CQF problem with penalized back-action, we establish first-order necessary conditions of optimality in the form of algebraic matrix  equations. By using the Hamiltonian structure of the Heisenberg dynamics and related Lie-algebraic techniques, we represent this set of equations in a more explicit form in the case of equally dimensioned plant and observer.
\end{abstract}

%%%%%%%%%%%%%%%%%%%%%%%%%%%%%%%%%%%%%%%%%%%%%%%%%%%%%%%%%%%%%%%%%%%%%%%%%%%%%%%%
\section{INTRODUCTION}
%%%%%%%%%%%%%%%%%%%%%%%%%%%%%%%%%%%%%%%%%%%%%%%%%%%%%%%%%%%%%%%%%%%%%%%%%%%%%%%%

Noncommutative counterparts of classical control and filtering problems \cite{AM_1989,KS_1972} are a subject of active research in quantum control which is concerned with  dynamical and stochastic systems governed by the laws of quantum mechanics and quantum probability \cite{H_2001,M_1995}. These developments (see, for example, \cite{JNP_2008,MJ_2012,NJP_2009,VP_2013a,VP_2013b}) are particularly focused on open quantum systems whose internal dynamics are affected by interaction with the environment. In such systems, the evolution of dynamic variables (as noncommutative operators on a Hilbert space) is often modelled using the Hudson-Parthasarathy calculus \cite{H_1991,HP_1984,P_1992} which provides a rigorous footing for quantum stochastic  differential equations (QSDEs) driven by quantum Wiener processes on symmetric Fock spaces.
In particular, linear QSDEs model open quantum harmonic oscillators (OQHOs) \cite{EB_2005} whose dynamic  variables (such as the position and momentum or annihilation and creation operators \cite{M_1998,S_1994}) satisfy canonical commutation relations (CCRs). This class of QSDEs is important for linear quantum control theory \cite{P_2010} and applications to quantum optics \cite{GZ_2004,WM_1994_book}.

One of the fundamental problems for quantum stochastic systems is the coherent quantum linear quadratic Gaussian (CQLQG)  control problem \cite{NJP_2009} which is a quantum mechanical counterpart of the classical LQG control problem. The latter is well-known in linear stochastic control theory due to the separation principle and its links with Kalman filtering and deterministic optimal control settings such as the linear quadratic regulator (LQR) problem \cite{AM_1989,KS_1972}. An important part of this theory is the stochastic filtering theory which has its roots in the works of Kolmogorov  and Wiener of the 1940s \cite{K_1941,W_1949} and is concerned with estimating a random process of interest by using the past history of measurements of another random process. However, coherent quantum feedback control \cite{L_2000,WM_1994_paper} employs the idea of control by interconnection, whereby quantum systems interact with each other directly or through optical fields in a measurement-free fashion, which can be described using the quantum feedback network formalism \cite{GJ_2009}. In comparison with the traditional observation-actuation control paradigm, coherent quantum control avoids the ``lossy'' conversion  of operator-valued quantum variables into classical signals (which underlies the quantum measurement process), is potentially faster and can be implemented on micro and nano-scales using natural  quantum mechanical effects.

In coherent quantum filtering (CQF) problems \cite{MJ_2012,VP_2013b}, which are ``feedback-free'' versions of the CQLQG control problem, an observer is cascaded in a measurement-free fashion with a quantum plant so as to develop quantum correlations with the latter over the course of time. Both problems employ mean square performance criteria and involve physical realizability (PR) constraints \cite{JNP_2008} on the state-space matrices of the quantum controllers and filters. The PR constraints are a consequence of the specific Hamiltonian structure of quantum dynamics and complicate the  design of optimal coherent quantum controllers and filters. Variational approaches of \cite{VP_2011b}--\cite{VP_2013b}  reformulate the underlying problem as a constrained covariance control problem and employ an adaptation of ideas from dynamic programming, the Pontryagin minimum principle \cite{PBGM_1962,SW_1997} and nonlinear functional analysis. In particular, the Frechet differentiation of the LQG cost with respect to the state-space matrices of the controller or filter  subject to the PR constraints leads to necessary conditions of optimality in the form of nonlinear algebraic matrix equations. Although this approach is quite similar to \cite{BH_1998,SIG_1998} (with the quantum nature of the problem manifesting itself only through the PR constraints), the resulting equations appear to be much harder to solve than their classical predecessors.

In a recent work \cite{V_2015a}, a methodological shift has been undertaken towards fully quantum variational techniques  based on infinitesimal perturbation analysis of open quantum systems beyond the parametric class of OQHOs. This allowed insight to be gained \cite{V_2015b}   on the local sufficiency of linear observers  in the CQF problem for linear quantum plants. This finding suggests that the complicated sets of nonlinear equations for optimal quantum controllers  and filters may appear to be more amenable to solution if they are approached using Hamiltonian structures   similar to those present in the underlying quantum dynamics. Such structures are particularly transparent in closed QHOs. Indeed, these models of linear quantum systems do not involve external bosonic fields and are technically simpler than the above mentioned OQHOs.

We employ this class of models in the present paper and consider a mean square optimal CQF problem for a plant and a directly coupled observer which form a  closed QHO. Since this setting does not use quantum Wiener processes, it simplifies the technical side of the treatment in comparison with \cite{MJ_2012,VP_2013b}. The Hamiltonian  of the plant-observer QHO is a quadratic function of the dynamic variables satisfying the CCRs. When the energy matrix, which specifies the quadratic form of the Hamiltonian, is positive semi-definite, the system variables of the QHO  are either constant or exhibit oscillatory behaviour. This motivates the use of a cost functional (being minimized) in the form of a discounted mean square of an estimation error (with an exponentially decaying weight \cite{B_1965}) with which the observer variables approximate given linear combinations of the plant variables of interest. The performance criterion also involves a quadratic penalty on the plant-observer coupling in order to achieve a compromise between the conflicting requirements of minimizing the estimation error and reducing the back-action of the observer on the plant.  The CQF problem with penalized back-action can also be regarded as a quantum-mechanical counterpart to the classical LQR problem. The use of discounted averages of nonlinear moments of system variables and the presence of optimization makes this setting  different from the time-averaged approach of \cite{P_2014} to CQF in directly coupled  QHOs (see \cite{PH_2015} for a quantum-optical implementation of that approach).

Since discounted moments of system variables for QHOs play an important role throughout the paper, we discuss the computation  of such moments in the state-space and frequency domains for completeness of exposition. Similarly to the variational approach of \cite{VP_2013a,VP_2013b}, we develop first-order necessary conditions of optimality for the CQF problem being considered. These conditions are organized as a set of two algebraic Lyapunov equations (ALEs) for the controllability and observability Gramians which are coupled through another equation for the Hankelian (the product of the Gramians) of the plant-observer composite system. We then employ the Hamiltonian structure of the underlying Heisenberg dynamics and represent this set of equations in terms of the commutators of appropriately transformed  Gramians. This representation allows the third equation to be explicitly solved not only for the plant-observer coupling matrix but also for the energy matrix of an observer of the same dimension as the plant, thus simplifying the set of equations. This reduction is achieved here due to the use of Lie-algebraic techniques (including the Jacobi identity \cite{D_2006}).

The paper is organized as follows.
Section~\ref{sec:QHO} specifies the closed QHOs including its subclass  with positive definite energy matrices. Section~\ref{sec:aver} describes the discounted averaging of moments for system operators in such QHOs both in the time and frequency domains.
Section~\ref{sec:QODE} specifies the direct coupling of quantum plants and coherent quantum observers. Section~\ref{sec:CQF} formulates the discounted mean square optimal CQF problem with penalized back-action. Section~\ref{sec:opt} establishes first-order necessary conditions of optimality for this problem. Section~\ref{sec:lie} represents the optimality conditions in a Lie-algebraic form.
Section~\ref{sec:same} specifies these results to the
case of equally dimensioned plant and observer.
Section~\ref{sec:conc} provides concluding  remarks.

%%%%%%%%%%%%%%%%%%%%%%%%%%%%%%%%%%%%%%%%%%%%%%%%%%%%%%%%%%%%%%%%%%%%%%%%%%%%%%%%%%%%%%%%%%%%%%%%%%%
\section{QUANTUM HARMONIC OSCILLATORS}\label{sec:QHO}
%%%%%%%%%%%%%%%%%%%%%%%%%%%%%%%%%%%%%%%%%%%%%%%%%%%%%%%%%%%%%%%%%%%%%%%%%%%%%%%%%%%%%%%%%%%%%%%%%%%

Consider a QHO \cite{M_1998}  with dynamic variables $X_1, \ldots, X_n$ (where $n$ is even) which are time-varying self-adjoint operators on a  complex separable Hilbert space $\cH$ satisfying the CCRs
\begin{equation}
\label{XCCR}
    [X(t),X(t)^{\rT}]
    :=
    ([X_j(t),X_k(t)])_{1\< j,k\< n}
    =
    2i
    \Theta,
    \qquad
    X
    :=
    {\begin{bmatrix}
        X_1(t)\\
        \vdots\\
        X_n(t)
    \end{bmatrix}}
\end{equation}
at any instant $t\> 0$ (we will often omit the time arguments for brevity). It is assumed that the \emph{CCR matrix}  $\Theta \in \mA_n$ is nonsingular.  Here, $\mA_n$ denotes the subspace of real antisymmetric matrices  of order $n$. The entries $\theta_{jk}$ of $\Theta$ on the right-hand side of (\ref{XCCR}) represent the scaling operators $\theta_{jk}\cI$, where $\cI$ is the identity operator  on the space $\cH$. The transpose $(\cdot)^{\rT}$ acts on matrices of operators as if the latter were scalars, vectors are organized as columns unless indicated otherwise,  $[\phi,\psi]:= \varphi\psi-\psi\varphi$ is the commutator of operators, and $i:=\sqrt{-1}$ is the imaginary unit. The QHO has a quadratic Hamiltonian
\begin{equation}
\label{HR}
    H := \frac{1}{2}X^{\rT} R X,
\end{equation}
specified by an \emph{energy matrix} $R \in \mS_n$, where $\mS_n$ denotes the subspace of real symmetric matrices of order $n$. Due to (\ref{XCCR}) and (\ref{HR}),  the Heisenberg dynamics of the QHO are governed by a linear ODE
\begin{equation}
\label{Xd}
    \dot{X} = i[H,X] = A X,
\end{equation}
where $A\in \mR^{n\x n}$  is a matrix of constant coefficients  given by
\begin{equation}
\label{A}
    A := 2\Theta R.
\end{equation}
The solution of the ODE (\ref{Xd}) is expressed using the standard matrix exponential as
\begin{equation}
\label{XA}
    X(t)
    =
    \sj_t(X_0)
    :=
    U(t)^{\dagger} X_0 U(t)
    =
    \re^{it \ad_{H_0}}(X_0)
    =
    \re^{tA} X_0,
\end{equation}
where
$\ad_{\alpha}:= [\alpha, \cdot]$, and  the subscript $(\cdot)_0$ indicates the initial values at time $t=0$.
The first three equalities in (\ref{XA}) apply to a general Hamiltonian $H_0$ (that is, not necessarily a quadratic function of $X_0$), and
$  U(t):= \re^{-itH_0}$
is a time-varying unitary operator on $\cH$ (with the adjoint  $U(t)^{\dagger} = \re^{itH_0}$), which specifies the flow $\sj_t$ in (\ref{XA}) acting as a unitary similarity transformation on the system variables. The flow $\sj_t$ preserves the CCRs (\ref{XCCR}) which, in view of the relation
$
    [X(t),X(t)^{\rT}]
    =
    \re^{tA} [X_0, X_0^{\rT}] \re^{tA^{\rT}}
    =
    2i\re^{tA} \Theta \re^{tA^{\rT}}
    =
    2i\Theta
$,
are equivalent to the symplectic property $\re^{tA} \Theta \re^{tA^{\rT}} = \Theta$ of the matrix $\re^{tA}$ for any time $t\> 0$.  The infinitesimal form of this property is
$    A\Theta + \Theta A^{\rT} = 0
$.
This equality corresponds to the PR conditions for OQHOs \cite{JNP_2008,SP_2009} and its fulfillment is ensured by the Hamiltonian structure $A\in \Theta \mS_n$ of the matrix  $A$ in (\ref{A}).

If the energy matrix in (\ref{HR}) is positive semi-definite, $R\succcurlyeq  0$ (and hence,  has a square root $\sqrt{R} \succcurlyeq 0$) then $A=2\Theta \sqrt{R}\sqrt{R}$ is isospectral to the matrix $2\sqrt{R}\Theta \sqrt{R} \in \mA_n$ whose eigenvalues are purely imaginary \cite{HJ_2007}. In the case
 $R\succ 0$, this follows directly from the similarity transformation
\begin{equation}
\label{AR}
    A = R^{-1/2} (2\sqrt{R}\Theta \sqrt{R}) \sqrt{R}
\end{equation}
(see, for example, \cite{P_2014})
which allows $A$ to be diagonalized as
\begin{equation}
\label{AV}
    A = i V \Omega W,
    \qquad
    W := V^{-1},
    \qquad
    \Omega := \diag_{1\< k\< n}(\omega_k).
\end{equation}
Here, $W:= (w_{jk})_{1\< j,k\< n}\in \mC^{n\x n}$ is the inverse of a nonsingular matrix  $V:= (v_{jk})_{1\< j,k\< n}\in \mC^{n\x n}$ whose columns $V_1,\ldots, V_n\in \mC^n$ are the eigenvectors of $A$, and $\Omega:= \diag_{1\< k\< n}(\omega_k)\in \mR^{n \x n}$ is a diagonal matrix of frequencies of the QHO. These frequencies  (which should not be confused with the eigenvalues of the Hamiltonian $H$ as an operator on $\cH$ describing the energy levels of the QHO \cite{S_1994}) are nonzero and symmetric about the origin, and, without loss of generality, are assumed to be arranged so that
\begin{equation}
\label{sym}
    \omega_k = -\omega_{k + \frac{n}{2}} >0,
    \qquad
    k=1, \ldots, \frac{n}{2}.
\end{equation}
Note that  $\sqrt{R} V$ is a unitary matrix  whose columns are the eigenvectors of the matrix $i\sqrt{R}\Theta \sqrt{R} \in \mH_n$ in view of (\ref{AR}); see also the proof of Williamson's symplectic diagonalization  theorem \cite{W_1936,W_1937} in \cite[pp. 244--245]{D_2006}. Here, $\mH_n$ is the subspace of complex Hermitian matrices of order $n$. Substitution of (\ref{AV}) into (\ref{XA}) leads to
\begin{equation}
\label{XAV}
    X(t) = V \re^{it\Omega} W X_0.
\end{equation}
Due to the presence of the matrix $\re^{it\Omega} = \diag_{1\< k\< n}(\re^{i\omega_k t})$ in (\ref{XAV}), the  dynamic variables of the QHO are linear combinations of their initial values whose coefficients are trigonometric polynomials of time:
\begin{equation}
\label{Xt0}
    X_j(t) = \sum_{k,\ell=1}^n c_{jk\ell}\re^{i\omega_k t} X_{\ell}(0),
    \qquad
    j=1, \ldots, n,
\end{equation}
where $c_{jk\ell}:= v_{jk} w_{k\ell}$ are complex parameters which are assembled into rank-one matrices
    $ C_k := (c_{jk\ell})_{1\< j,\ell \< n} = V_kW_k$,
with $W_k$  denoting the $k$th row of $W$. The matrices $C_1, \ldots, C_n$ form a resolution of the identity: $\sum_{k=1}^n C_k  = VW=I_n$. Also,
    $ \overline{C_k} = C_{k + \frac{n}{2}}$
for all $
    k = 1, \ldots, \frac{n}{2}$,
in accordance with (\ref{sym}), whereby (\ref{Xt0}) can be represented in vector-matrix form as
\begin{equation}
\label{XCX}
    X(t)
    =
    \sum_{k=1}^{n/2}
    \big(
    \re^{i\omega_k t} C_k
    +
    \re^{-i\omega_k t} \overline{C_k}
    \big)
    X_0
    =
    2\sum_{k=1}^{n/2}\Re(\re^{i\omega_k t} C_k) X_0,
\end{equation}
where $\overline{(\cdot)}$ denotes the complex conjugate.
Therefore, for any positive integer $d$ and any $d$-index $j:= (j_1, \ldots, j_d)\in \{1, \ldots, n\}^d$, the following degree $d$ monomial of the system variables is also a trigonometric polynomial of time $t$:
\begin{equation}
\label{Xit}
    \Xi_j(t)
    :=
    \rprod_{s=1}^{d}
    X_{j_s}(t)
    =
    \sum_{k,\ell \in \{1, \ldots, n\}^d}\,
    \prod_{s=1}^{d}
    c_{j_sk_s\ell_s}
    \re^{i\omega_{k_s}t}\,
    \Xi_{\ell}(0).
\end{equation}
Here, $\rprod$ denotes the ``rightwards'' ordered product of operators (with the order of multiplication being essential for non-commutative quantum variables),
and the sum is taken over $d$-indices $k:= (k_1, \ldots, k_d), \ell:= (\ell_1, \ldots, \ell_d)\in \{1, \ldots, n\}^d$. Also, note that (\ref{Xt0}) is a particular case of (\ref{Xit}) with $d=1$. The relations (\ref{XAV})--(\ref{Xit}) remain valid in the case $R\succcurlyeq 0$, except that the frequencies $\omega_1, \ldots, \omega_{n/2}$ in (\ref{sym}) are nonnegative.

%%%%%%%%%%%%%%%%%%%%%%%%%%%%%%%%%%%%%%%%%%%%%%%%%%%%%%%%%%%%%%%%%%%%%%%%%%%%%%%%%%%%%%%%%%%%%%%%%%%
\section{DISCOUNTED MOMENTS OF SYSTEM OPERATORS}\label{sec:aver}
%%%%%%%%%%%%%%%%%%%%%%%%%%%%%%%%%%%%%%%%%%%%%%%%%%%%%%%%%%%%%%%%%%%%%%%%%%%%%%%%%%%%%%%%%%%%%%%%%%%

For any $\tau>0$, we define a linear functional $\bE_{\tau}$ which maps a system operator $\sigma$ of the QHO to the weighted time average
\begin{equation}
\label{bEtau}
    \bE_{\tau}\sigma:= \frac{1}{\tau} \int_0^{+\infty} \re^{-t/\tau} \bE \sigma(t)\rd t.
\end{equation}
Here, $\bE\sigma:= \Tr(\rho \sigma)$  denotes the quantum expectation over the underlying quantum state $\rho$ (which is a positive semi-definite self-adjoint operator on $\cH$ with unit trace).
The weighting function  $\frac{1}{\tau} \re^{-t/\tau}$  in (\ref{bEtau}) is the density of an exponential probability distribution with mean value $\tau$. Therefore, $\tau$ plays the role of an effective horizon for averaging $\bE\sigma$ over time.  This time average (where the relative importance of the quantity of interest decays exponentially)  has the structure of
a discounted cost functional in dynamic programming problems \cite{B_1965}. In particular, if $\bE\sigma(t)$, as a function of time $t\>0$,  is right-continuous at $t=0$, then  $\lim_{\tau\to 0+} \bE_{\tau} \sigma = \bE \sigma_0$. At the other extreme,
the \emph{infinite-horizon average} of $\sigma$ is defined by
\begin{equation}
\label{bEinf}
    \bE_{\infty} \sigma
    :=
    \lim_{\tau\to +\infty} \bE_{\tau}\sigma
    =
    \lim_{\tau\to +\infty}
    \Big(
        \frac{1}{\tau}
        \int_0^\tau
        \bE \sigma(t)
        \rd t
    \Big),
\end{equation}
provided these limits exist. The second of these equalities, whose right-hand side is the Cesaro mean of $\bE \sigma$, follows from the integral version of
the Hardy-Littlewood Tauberian theorem \cite{F_1971}. In particular, (\ref{bEinf}) implies that $|\bE_{\infty}\sigma|\< \limsup_{t\to +\infty}|\bE \sigma(t)|$.

In the case when the QHO has a positive semi-definite energy matrix, the coefficients in (\ref{XCX}) and (\ref{Xit}) are either constant or oscillatory, which makes the time averages (\ref{bEtau}) and (\ref{bEinf}) well-defined for nonlinear functions of the system variables and their moments for any $\tau>0$.
To this end, we will use the characteristic function $\chi_{\tau}: \mR\to \mC$ of the exponential distribution and its pointwise convergence:
\begin{equation}
\label{chi}
    \chi_{\tau}(u)
     :=  \frac{1}{\tau}
    \int_0^{+\infty}
    \re^{-t/\tau}
    \re^{iut}
    \rd t
     =
    \frac{1}{1-iu \tau}
      \to
    \delta_{u0}
     =
    \left\{
        \begin{matrix}
            1 & {\rm if}\  u =0\\
            0 & {\rm if}\  u \ne 0
        \end{matrix}
    \right.,
    \qquad
    {\rm as}\
    \tau\to +\infty,
\end{equation}
where $\delta_{pq}$ is the Kronecker delta.
A combination of (\ref{Xit}) with (\ref{chi}) implies that if the initial system variables of the QHO have finite mixed moments $\bE \Xi_{\ell}(0)$ of order $d$ for all $\ell \in \{1,\ldots, n\}^d$, then such moments have the following time-averaged values (\ref{bEtau}):
\begin{equation}
\label{EXi}
    \bE_{\tau} \Xi_j
    :=
    \frac{1}{\tau}
    \int_0^{+\infty}
    \re^{-t/\tau}
    \bE \Xi_j(t)
    \rd t
    =
    \sum_{k\in \{1, \ldots, n\}^d}
    \chi_{\tau}\Big(\sum_{s=1}^d \omega_{k_s}\Big)\!\!
    \sum_{\ell \in \{1, \ldots, n\}^d}\,
    \prod_{s=1}^{d}
    c_{j_sk_s\ell_s}
    \bE \Xi_{\ell}(0)
\end{equation}
for any $j\in \{1, \ldots, n\}^d$. Hence, the corresponding infinite-horizon average (\ref{bEinf}) takes the form
\begin{equation}
\label{EXiinf}
    \bE_{\infty} \Xi_j
    =
    \sum_{k\in \cK_d}\,
    \sum_{\ell \in \{1, \ldots, n\}^d}\,
    \prod_{s=1}^{d}
    c_{j_sk_s\ell_s}
    \bE \Xi_{\ell}(0),
\end{equation}
where
$
    \cK_d
    :=
    \big\{
        (k_1, \ldots, k_d)\in \{1, \ldots, n\}^d:\
        \sum_{s=1}^{d}\omega_{k_s} = 0
    \big\}
$
is a subset of $d$-indices associated with the frequencies $\omega_1, \ldots, \omega_n$ of the QHO from (\ref{AV}).
For every even $d$, the set $\cK_d$ is nonempty due to the central symmetry of the frequencies.

The linear functional $\bE_{\tau}$ in (\ref{EXi}) and its limit $\bE_{\infty}$ in (\ref{EXiinf}) are extendable to polynomials and more general functions $\sigma:= f(X)$ of the system variables, provided $X_0$ satisfies appropriate integrability conditions. Such an extension of $\bE_{\infty}$, which involves the Cesaro mean, is similar to the argument used in the context of Besicovitch spaces of almost periodic functions \cite{B_1954}.
If the system is in an invariant state $\rho$ (which, therefore, satisfies $[H_0, \rho] = 0$), then the quantum expectation $\bE \sigma = \Tr( \rho \re^{it\ad_{H_0}}(\sigma_0)) = \Tr(\re^{-it\ad_{H_0}}(\rho)\sigma_0) = \Tr(\rho\sigma_0) $ is time-independent for any system operator $\sigma_0$ evolved by the flow (\ref{XA}). In this case, the time averaging in (\ref{bEtau}) becomes redundant.
However, the subsequent discussion is concerned with general (not necessarily invariant) quantum states $\rho$.

Of particular use for our  purposes is the following state-space computation of the discounted time average (\ref{bEtau}) for second moments of the system variables, which is concerned with finite values of $\tau$ and does not employ
the imaginarity of the spectrum of $A$.
To this end, we note that $\bE(XX^{\rT})\in \mH_n^+$  at every moment of time due to
the
generalized Heisenberg uncertainty principle \cite{H_2001}, where $\mH_n^+$ denotes the set of complex positive semi-definite Hermitian matrices of order $n$. Furthermore, $\Im \bE(XX^{\rT}) = \Theta$ remains unchanged in view of the preservation of the CCRs (\ref{XCCR}) mentioned above. Also, with any  Hurwitz matrix $\alpha$, we associate a linear operator $\bL(\alpha, \cdot)$
which maps an appropriately dimensioned matrix $\beta$ to a unique solution $\gamma=\bL(\alpha,\beta)$  of the
ALE $\alpha \gamma +\gamma\alpha^{\rT}+\beta= 0$:
\begin{equation}
\label{ILO}
  \bL(\alpha,\beta):= \int_0^{+\infty} \re^{t\alpha}\beta \re^{t\alpha^{\rT}}\rd t.
\end{equation}

%%%%%%%%%%%%%%%%%%%%%%%%%%%%%%%%%%%%%%%%%%%%%%%%%%%%%%%%%%%%%%%%%%%%%%%%%%%%%%%%%%%%%%%%%%%%%%%%%%%
\begin{lem}
\label{lem:PALE}
Suppose the initial dynamic variables of the QHO have finite second moments (that is, $\bE(X_0^{\rT}X_0)<+\infty$) whose real parts form the matrix
\begin{equation}
\label{Sigma}
  \Sigma:= \Re \bE(X_0X_0^{\rT}).
\end{equation}
Also, let the effective time horizon $\tau>0$ be bounded above as
\begin{equation}
\label{taumax}
    \tau < \frac{1}{2\max(0,\, \ln \br(\re^A))},
\end{equation}
where $\br(\cdot)$ denotes the spectral radius of a matrix.
Then the matrix of the real parts of the discounted second moments of the dynamic variables can be computed as
\begin{equation}
\label{P}
    P:= \Re \bE_{\tau}(XX^{\rT}) = \frac{1}{\tau}\bL(A_{\tau},\Sigma)
\end{equation}
through the operator (\ref{ILO}). That is, $P$ is a unique solution of the following ALE with a Hurwitz matrix $A_{\tau}$:
\begin{equation}
\label{PALE}
    A_{\tau}
    P
    +
    P
    A_{\tau}^{\rT} + \frac{1}{\tau}\Sigma= 0,
    \qquad
    A_{\tau}:= A - \frac{1}{2\tau}I_n.
\end{equation}
\end{lem}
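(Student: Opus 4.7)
The plan is to work directly from the explicit Heisenberg solution $X(t) = \re^{tA}X_0$ in (\ref{XA}), since the right-hand side of (\ref{P}) is essentially a scaled Laplace transform of the real part of the second-moment matrix. First I would form the outer product and take the quantum expectation: because $\re^{tA}$ is a matrix of scalars and commutes through $\bE$, this gives $\bE(X(t)X(t)^{\rT}) = \re^{tA}\bE(X_0 X_0^{\rT})\re^{tA^{\rT}}$. Since $A$ is real, taking real parts commutes with the similarity and yields $\Re \bE(X(t)X(t)^{\rT}) = \re^{tA}\Sigma \re^{tA^{\rT}}$, with $\Sigma$ as in (\ref{Sigma}); the finiteness hypothesis on $\bE(X_0^{\rT} X_0)$ ensures the entries of $\Sigma$ are finite.

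Substituting this into (\ref{bEtau}), I would obtain
\[
    P
    =
    \frac{1}{\tau}\int_0^{+\infty}\re^{-t/\tau}\re^{tA}\Sigma \re^{tA^{\rT}}\rd t.
\]
Because $I_n$ commutes with every matrix, I would split the scalar weight as $\re^{-t/\tau} = \re^{-t/(2\tau)}\re^{-t/(2\tau)}$ and absorb each factor into the corresponding matrix exponential, using $\re^{tA_{\tau}} = \re^{-t/(2\tau)}\re^{tA}$ and the analogous identity for the transpose. This rewrites the integrand as $\re^{tA_{\tau}}\Sigma \re^{tA_{\tau}^{\rT}}$, matching the defining integral (\ref{ILO}) of $\bL(A_{\tau},\Sigma)$, from which (\ref{P}) follows; the ALE (\ref{PALE}) is then the standard characterization of $\bL(A_{\tau},\Sigma)$ as the unique solution of $A_{\tau}P + PA_{\tau}^{\rT} + \Sigma/\tau = 0$, obtained by differentiating $\re^{tA_{\tau}}(\Sigma/\tau)\re^{tA_{\tau}^{\rT}}$ in $t$ and integrating the identity from $0$ to $+\infty$.

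The main obstacle, and the step that actually uses the hypothesis (\ref{taumax}), is verifying that $A_{\tau}$ is Hurwitz so that the improper integral converges. Here I would use the spectral mapping identity $\br(\re^A) = \max_j|\re^{\lambda_j(A)}| = \re^{\max_j \Re \lambda_j(A)}$, so that $\ln \br(\re^A) = \max_j \Re \lambda_j(A)$. The bound (\ref{taumax}) then becomes $\max\bigl(0,\max_j \Re \lambda_j(A)\bigr) < \frac{1}{2\tau}$, which in either case forces $\max_j \Re \lambda_j(A) < \frac{1}{2\tau}$. Hence each eigenvalue $\lambda_j(A_{\tau}) = \lambda_j(A) - \frac{1}{2\tau}$ has strictly negative real part, absolute convergence of the integral follows from the Jordan-form bound $\|\re^{tA_{\tau}}\| = O(t^{n-1}\re^{-\eps t})$ for some $\eps>0$, and the derivation above is rigorously justified.
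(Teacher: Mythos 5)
Your proposal is correct and follows essentially the same route as the paper's proof: expressing $\Re\bE(X(t)X(t)^{\rT})=\re^{tA}\Sigma\re^{tA^{\rT}}$ from (\ref{XA}), absorbing the exponential weight $\re^{-t/\tau}$ into the matrix exponentials to identify the integral with $\frac{1}{\tau}\bL(A_{\tau},\Sigma)$, and invoking (\ref{taumax}) to ensure $A_{\tau}$ is Hurwitz. Your added spelling-out of why (\ref{taumax}) makes $A_{\tau}$ Hurwitz (via $\ln\br(\re^A)=\max_j\Re\lambda_j(A)$) is a detail the paper leaves implicit, but it is the same argument.
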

%%%%%%%%%%%%%%%%%%%%%%%%%%%%%%%%%%%%%%%%%%%%%%%%%%%%%%%%%%%%%%%%%%%%%%%%%%%%%%%%%%%%%%%%%%%%%%%%%%%
\begin{proof}
By combining (\ref{XA}) with (\ref{Sigma}), it follows that $\Re \bE(X(t)X(t)^{\rT}) = \re^{tA} \Sigma \re^{tA^{\rT}}$ for any $t\>0$. Hence, in application to (\ref{P}),  the time average (\ref{bEtau}) can be computed  as
$$
    P
     =
    \frac{1}{\tau}
    \int_0^{+\infty}
    \re^{-t/\tau}
    \Re \bE(X(t)X(t)^{\rT})
    \rd t
      =
    \frac{1}{\tau}
    \int_0^{+\infty}
    \re^{-t/\tau}
    \re^{tA} \Sigma \re^{tA^{\rT}}
    \rd t
      =
    \frac{1}{\tau}
    \int_0^{+\infty}
    \re^{tA_{\tau}} \Sigma \re^{tA_{\tau}^{\rT}}
    \rd t
    =\frac{1}{\tau}\bL(A_{\tau},\Sigma),
$$
thus establishing the representation (\ref{P}).
Here, the matrix $A_{\tau}$, given by (\ref{PALE}),
is Hurwitz due to the condition (\ref{taumax}).
\end{proof}
%%%%%%%%%%%%%%%%%%%%%%%%%%%%%%%%%%%%%%%%%%%%%%%%%%%%%%%%%%%%%%%%%%%%%%%%%%%%%%%%%%%%%%%%%%%%%%%%%%%

In view of (\ref{PALE}), the matrix $P$ is the controllability Gramian \cite{KS_1972} of the pair $(A_{\tau}, \sqrt{\tau^{-1}\Sigma})$.
In contrast to similar ALEs for steady-state covariance matrices in dissipative OQHOs \cite{EB_2005} (where the corresponding matrix $A$ itself is Hurwitz), the term   $\frac{1}{\tau}\Sigma$ in (\ref{PALE}) comes from the initial condition (\ref{Sigma}) instead of the Ito matrix of the quantum Wiener process \cite{H_2001,HP_1984,P_1992}.
Since $A$ is a Hamiltonian matrix (and hence, its spectrum is symmetric about the imaginary axis), the condition (\ref{taumax}) is equivalent to the eigenvalues  of $A$ being contained in the strip $\big\{z \in \mC:\ |\Re z|< \frac{1}{2\tau}\big\}$.
For any $\tau>0$ satisfying (\ref{taumax}), a frequency-domain representation of the matrix $P$ in (\ref{P}) is
\begin{equation}
\label{PLap}
    P
      =
    \frac{1}{2\pi\tau}
    \Re
    \int_{-\infty}^{+\infty}
    F\Big(\frac{1}{2\tau} + i\omega\Big)
    \Gamma
    F
    \Big(\frac{1}{2\tau} + i\omega\Big)^*
    \rd \omega
     =
    \frac{1}{2\pi\tau}
    \Im
    \int_{\Re s=\frac{1}{2\tau}}
    F(s)
    \Gamma
    F(s)^*
    \rd s,
\end{equation}
where $(\cdot)^*:= (\overline{(\cdot)})^{\rT}$ denotes the complex conjugate transpose.
Here,
 $   \Gamma:= \bE(X_0 X_0^{\rT}) = \Sigma +i\Theta$
is the matrix of second moments of the initial system variables, and
 $   F(s):= (sI_n-A)^{-1}$
is the transfer function (where the complex variable $s$ satisfies     $\Re s > \ln\br(\re^A)$) which relates
the Laplace transform
 $   \wt{X}(s)
    :=
    \int_0^{+\infty}
    \re^{-st} X(t)\rd t$
of the quantum process $X$ from (\ref{XA}) to its initial value $X_0$ as
$
    \wt{X}(s)
    =
    \int_0^{+\infty}
    \re^{-t(sI_n-A)}\rd tX_0  = F(s) X_0
$.
The representation (\ref{PLap}) is obtained by applying an operator version of the Plancherel theorem to the inverse Fourier transform
$
    \re^{-\frac{t}{2\tau}} X(t)
    =
    \frac{1}{2\pi}
    \int_{-\infty}^{+\infty}
    \re^{i\omega t}
    \wt{X}\big(\frac{1}{2\tau} + i\omega\big)
    \rd \omega$ for
$
    t\> 0
$
under the condition (\ref{taumax}).

%%%%%%%%%%%%%%%%%%%%%%%%%%%%%%%%%%%%%%%%%%%%%%%%%%%%%%%%%%%%%%%%%%%%%%%%%%%%%%%%%%%%%%%%%%%%%%%%%%%
\section{DIRECTLY COUPLED  QUANTUM PLANT AND COHERENT QUANTUM OBSERVER}\label{sec:QODE}
%%%%%%%%%%%%%%%%%%%%%%%%%%%%%%%%%%%%%%%%%%%%%%%%%%%%%%%%%%%%%%%%%%%%%%%%%%%%%%%%%%%%%%%%%%%%%%%%%%%
%%%%%%%%%%%%%%%%%%%%%%%%%%%%%%%%%%%%%%%%%%%%%%%%%%%%%%%%%%%%%%%%%%%%%%%%%%%%%%%%

Consider a direct coupling of a quantum plant and a coherent quantum observer which form a closed QHO whose Hamiltonian $H$ is given by
\begin{equation}
\label{H}
    H
    :=
    \frac{1}{2}\cX^{\rT} R \cX,
    \qquad
    \cX
    :=
    {\begin{bmatrix}
        X\\
        \xi
    \end{bmatrix}},
    \qquad
    X
    :=
    {\begin{bmatrix}
    X_1\\
    \vdots\\
    X_n
    \end{bmatrix}},
    \qquad
    \xi
    :=
    {\begin{bmatrix}
    \xi_1\\
    \vdots\\
    \xi_{\nu}
    \end{bmatrix}},
\end{equation}
where $R \in \mS_{n+\nu}$ is the plant-observer energy matrix.
Here, $X_1, \ldots, X_n$ and $\xi_1, \ldots, \xi_{\nu}$ are the dynamic variables of the plant and the observer, respectively, with both dimensions $n$ and $\nu$ being  even.  The plant and observer variables are time-varying self-adjoint operators on the tensor-product space $\cH:= \cH_1\ox \cH_2$, where   $\cH_1$ and $\cH_2$ are initial complex separable Hilbert  spaces of the plant and the observer (which can be copies of a common Hilbert space). These quantum variables are assumed to satisfy the CCRs with a block-diagonal CCR matrix $\Theta$:
\begin{equation}
\label{Theta}
    [\cX,\cX^{\rT}]
    =
    2i \Theta,
    \qquad
    \Theta
    :=
    \diag_{k=1,2}(\Theta_k),
\end{equation}
where $\Theta_1\in \mA_n$ and $\Theta_2 \in \mA_{\nu}$ are nonsingular CCR matrices of the plant and the observer, respectively. For what follows, the plant-observer energy matrix $R$ in (\ref{H}) is partitioned as
\begin{equation}
\label{R}
    R:=
    {\begin{bmatrix}
        K  & L\\
        L^{\rT} & M
    \end{bmatrix}}.
\end{equation}
Here, $K\in \mS_n$ and $M\in \mS_{\nu}$ are the energy matrices of the plant and the observer which specify their free Hamiltonians $
    H_1  := \frac{1}{2}X^{\rT}K X
$ and $
    H_2:= \frac{1}{2}\xi^{\rT}M \xi$. Also, $L\in \mR^{n\x \nu}$ is the \emph{plant-observer coupling matrix} which parameterizes the interaction Hamiltonian $
    H_{12}:= \frac{1}{2} (X^{\rT}L\xi + \xi^{\rT}L^{\rT}X) =
    \Re(X^{\rT}L\xi)
$,
where $\Re(\cdot)$ applies to operators (and matrices of operators) so that $\Re N := \frac{1}{2}(N+N^{\#})$ consists of self-adjoint operators. Accordingly, the total Hamiltonian $H$ in (\ref{H}) is representable as $H = H_1 + H_2 + H_{12}$.
In view of (\ref{H})--(\ref{R}),  the Heisenberg dynamics of the composite system are governed by a linear ODE
\begin{equation}
\label{cXdot}
    \dot{\cX} = i[H,\cX] =\cA \cX.
\end{equation}
Here, in accordance with the partitioning of $\cX$ in (\ref{H}), the matrix $\cA\in \mR^{(n+\nu)\x (n+\nu)}$ is split into appropriately dimensioned blocks as
\begin{equation}
\label{cA}
    \cA
    :=
    2\Theta R
    =
    2
    {\begin{bmatrix}
        \Theta_1 K & \Theta_1 L\\
        \Theta_2 L^{\rT} & \Theta_2 M
    \end{bmatrix}}
    =
{\begin{bmatrix} A & BL\\ \beta L^{\rT} & \alpha\end{bmatrix}},
\end{equation}
with the ODE (\ref{cXdot}) being representable as a set of two ODEs
\begin{align}
\label{Xdot}
    \dot{X}  & = AX +  B \eta,\\
\label{xidot}
    \dot{\xi}  & = \alpha \xi + \beta Y,
\end{align}
where         $A:= 2\Theta_1 K$,
        $B := 2\Theta_1$, $
        \alpha := 2\Theta_2 M$,
        $\beta := 2\Theta_2$, and
\begin{equation}
\label{Yeta}
    Y  := L^{\rT} X,
    \qquad
    \eta := L\xi.
\end{equation}
The vector $\eta$ drives the plant variables in (\ref{Xdot}), thus resembling a classical actuator signal. The observer variables in (\ref{xidot}) are driven by the plant variables through the vector $Y$ which  corresponds to a classical  observation output from the plant.  However, the quantum mechanical nature of $Y$ and $\eta$ (which consist of time-varying self-adjoint operators on $\cH$) makes them qualitatively different from the classical signals \cite{AM_1989,KS_1972}. In particular, since the plant and the observer being considered form a fully quantum system which does not involve measurements,
$Y$ is not an observation signal in the usual control theoretic sense. In order to emphasize this distinction, such observers are referred to as coherent (that is, measurement-free) quantum observers \cite{JNP_2008,L_2000,MJ_2012,NJP_2009,VP_2013b,WM_1994_paper}.
In addition to the noncommutativity of the dynamic variables, specified by the CCRs (\ref{Theta}), the quantum mechanical nature of the setting manifests itself in the fact that the ``observation'' and ``actuation'' channels  in (\ref{Yeta}) depend on the same matrix $L$. This coupling  between the ODEs (\ref{Xdot}), (\ref{xidot}) is closely related to the Hamiltonian structure $\cA\in \Theta \mS_{n+\nu} $ of the matrix $\cA$ in (\ref{cA}).
Therefore, the ``quantum information flow'' from the plant to the observer through $Y$  has a ``back-action'' effect on the plant dynamics through $\eta$.

Assuming that the plant energy matrix $K$ is
fixed,
the matrices $L$ and $M$ can be varied so as to achieve desired properties for the plant-observer QHO under constraints on the plant-observer coupling. To this end, for a given effective time horizon $\tau>0$, the observer will be called \emph{$\tau$-admissible} if the matrix $\cA$ in (\ref{cA}) satisfies
\begin{equation}
\label{taugood}
    \tau < \frac{1}{2\max(0,\, \ln \br(\re^{\cA}))},
\end{equation}
cf. (\ref{taumax}) of Lemma~\ref{lem:PALE}. The corresponding pairs $(L,M)$ form an open subset of $\mR^{n\x \nu}\x \mS_{\nu}$ which depends on $\tau$. In application  to the plant-observer system,  the discussions of Section~\ref{sec:QHO} show that
if the matrix $R$ in (\ref{R}) is positive semi-definite (and hence, $\cA$ has a purely imaginary spectrum), then such an observer is $\tau$-admissible for any $\tau>0$.
In this case, any system operator (with appropriate finite moments) in the plant-observer QHO lends itself to the discounted averaging, described in Section~\ref{sec:aver}, for any $\tau>0$.
For what follows, it is assumed that the initial plant and observer variables have a block-diagonal matrix of second moments:
\begin{equation}
\label{Sig}
  \Sigma
  :=
  \Re \bE(\cX_0\cX_0^{\rT})
  =
  \diag_{k=1,2}(\Sigma_k),
\end{equation}
where $\Sigma_k+i\Theta_k \succcurlyeq 0$. In the zero mean case $\bE \cX_0 = 0$, this corresponds to $X_0$ and $\xi_0$ being uncorrelated. A physical rationale for the absence of initial correlation is that the observer is prepared independently of the plant and then brought into interaction at time $t=0$.
If the plant and the observer remained uncoupled  (which would  correspond to $L=0$), then, in view of Lemma~\ref{lem:PALE} and (\ref{Sig}), their variables would remain uncorrelated (in the sense that $\bE (X\xi^{\rT}) = 0$).
In the general case of plant-observer coupling $L\ne 0$,  the matrix
\begin{equation}
\label{cP}
    \cP
    :=
    {\begin{bmatrix}
        \cP_{11} & \cP_{12}\\
        \cP_{21} & \cP_{22}
    \end{bmatrix}}
    :=
\Re \bE_{\tau} (\cX\cX^{\rT}),
\end{equation}
which is split into blocks similarly to $\cA$ in (\ref{cA}),
coincides with the controllability Gramian of the pair $(\cA_{\tau}, \sqrt{\tau^{-1}\Sigma})$ and satisfies an appropriate ALE:
\begin{equation}
\label{cPALE}
    \cP = \frac{1}{\tau}\bL(\cA_{\tau}, \Sigma),
\qquad
    \cA_{\tau}
    :=
    \cA - \frac{1}{2\tau}I_{n+\nu},
\end{equation}
provided the observer is $\tau$-admissible in the sense of (\ref{taugood}), thus making the matrix $\cA_{\tau}$ Hurwitz.  Here, $\Sigma$ is the initial covariance condition from (\ref{Sig}).

%%%%%%%%%%%%%%%%%%%%%%%%%%%%%%%%%%%%%%%%%%%%%%%%%%%%%%%%%%%%%%%%%%%%%%%%%%%%%%%%%%%%%%%%%%%%%%%%%%%
\section{DISCOUNTED MEAN SQUARE OPTIMAL COHERENT QUANTUM FILTERING PROBLEM}\label{sec:CQF}
%%%%%%%%%%%%%%%%%%%%%%%%%%%%%%%%%%%%%%%%%%%%%%%%%%%%%%%%%%%%%%%%%%%%%%%%%%%%%%%%%%%%%%%%%%%%%%%%%%%
%%%%%%%%%%%%%%%%%%%%%%%%%%%%%%%%%%%%%%%%%%%%%%%%%%%%%%%%%%%%%%%%%%%%%%%%%%%%%%%%

If the plant energy matrix satisfies $K\succcurlyeq  0$, then the set of $\tau$-admissible observers is nonempty for any $\tau>0$. This set contains observers with $M\succcurlyeq 0$ and $L:=\sqrt{K} \Lambda \sqrt{M}$, where $\Lambda\in \mR^{n\x \nu}$ is an arbitrary matrix whose largest singular value satisfies $\|\Lambda\|\< 1$. Indeed, for any such observer, $R=\diag(\sqrt{K},\sqrt{M}){\small \begin{bmatrix}I_n & \Lambda\\ \Lambda^{\rT} & I_{\nu}\end{bmatrix}}\diag(\sqrt{K},\sqrt{M})\succcurlyeq 0$, and hence, the matrix $\cA$ in (\ref{cA}) has a purely imaginary spectrum. Consider a CQF problem of minimizing  a discounted mean square cost functional $\cZ$ over the plant-observer coupling matrix $L$  and the observer energy matrix $M$ subject to the constraint (\ref{taugood}):
\begin{equation}
\label{cZ}
    \cZ
    :=
    \bE_{\tau} Z
        \longrightarrow \min,
\end{equation}
where $\tau>0$ is a given effective time horizon for the discounted averaging (\ref{bEtau}). This averaging is applied to the process
\begin{equation}
    \label{ZE}
    Z:= E^{\rT}E + \lambda\eta^{\rT} \Pi \eta = \cX^{\rT}\cC^{\rT}\cC\cX
\end{equation}
which is a time-varying self-adjoint operator on the plant-observer space $\cH$ defined in terms of
the vectors $\cX$, $\eta$ from  (\ref{H}), (\ref{Yeta}), with
\begin{equation}
\label{ES}
    E:= S_1X-S_2\xi,
    \qquad
    \cC:= {\begin{bmatrix} S_1 & -S_2\\ 0 & \sqrt{\lambda \Pi} L\end{bmatrix}}.
\end{equation}
Also,
$S_1\in \mR^{p\x n}$, $S_2\in \mR^{p\x \nu}$ and $\Pi \in \mS_n$ are given matrices, with $\Pi\succ 0$,  which, together with a scalar parameter $\lambda>0$, determine the matrix $\cC\in \mR^{(p+n)\x (n+\nu)}$ and its dependence on the coupling matrix $L$.
 The matrix  $S_1$ specifies linear combinations of the plant variables of interest to be approximated by given linear functions of the observer variables specified by the matrix $S_2$. Accordingly, the $p$-dimensional  vector $E$ in (\ref{ES}) (which consists of time-varying self-adjoint operators on $\cH$)
is interpreted as an \emph{estimation error}. In addition to the discounted mean square $\bE_{\tau} (E^{\rT} E)$ of the estimation error,  the cost functional $\cZ$ in (\ref{cZ}) involves a quadratic penalty $\bE_{\tau}(\eta^{\rT} \Pi\eta)$ for the observer back-action on the covariance dynamics of the plant,
with $\lambda$ being the relative weight of this penalty in $\cZ$. 

The parameter $\lambda$ in the CQF problem (\ref{cZ})--(\ref{ES}) quantifies a compromise  between the conflicting requirements of minimizing the estimation error and reducing the back-action.
In fact, $\cZ$ is organized as the Lagrange function for a related problem of minimizing the discounted mean square of the estimation error subject to a  weighted mean square constraint on the plant-observer coupling:
\begin{equation}
\label{CQF2}
    \bE_{\tau} (E^{\rT} E)
    \longrightarrow
    \min,
    \qquad
    \bE_{\tau} (\eta^{\rT}\Pi\eta)\< r.
\end{equation}
In this formulation,  $\lambda$ plays the role of a Lagrange multiplier which is found so as to make the solution of (\ref{cZ}) saturate the constraint in (\ref{CQF2}) for a given threshold $r> 0$.

In the particular case  of $S_2=0$, the CQF problem  (\ref{cZ})--(\ref{ES}) can be regarded as a quantum mechanical analogue of the LQR problem \cite{AM_1989,KS_1972} in view of the analogy between the observer output $\eta$ and classical actuation signals discussed in Section~\ref{sec:QODE}.  The presence of the  quantum expectation  of a nonlinear function of system variables in (\ref{cZ}) and the optimization requirement make this setting different from the time-averaged approach of \cite{P_2014,PH_2015}.

%%%%%%%%%%%%%%%%%%%%%%%%%%%%%%%%%%%%%%%%%%%%%%%%%%%%%%%%%%%%%%%%%%%%%%%%%%%%%%%%
\section{FIRST-ORDER NECESSARY CONDITIONS OF OPTIMALITY}\label{sec:opt}
%%%%%%%%%%%%%%%%%%%%%%%%%%%%%%%%%%%%%%%%%%%%%%%%%%%%%%%%%%%%%%%%%%%%%%%%%%%%%%%%
%%%%%%%%%%%%%%%%%%%%%%%%%%%%%%%%%%%%%%%%%%%%%%%%%%%%%%%%%%%%%%%%%%%%%%%%%%%%%%%%%

The following theorem, which  provides first-order necessary conditions of optimality for the CQF problem (\ref{cZ})--(\ref{ES}),   employs
the Hankelian
\begin{equation}
\label{cE}
    \cE
    :=
    {\begin{bmatrix}
        \cE_{11} & \cE_{12}\\
        \cE_{21} & \cE_{22}
    \end{bmatrix}}
    :=
    \cQ\cP.
\end{equation}
This matrix is associated with the controllability Gramian $\cP$ from (\ref{cP}), (\ref{cPALE}) and the observability Gramian $\cQ$ of $(\cA_{\tau}, \cC)$ which is a unique solution of the corresponding ALE:
\begin{align}
\label{cQALE}
    \cQ :=
        {\begin{bmatrix}
        \cQ_{11} & \cQ_{12}\\
        \cQ_{21} & \cQ_{22}
    \end{bmatrix}}
    =
    \bL(\cA_{\tau}^{\rT}, \cC^{\rT}\cC).
\end{align}
 The matrices $\cE$ and $\cQ$ are partitioned into appropriately dimensioned blocks $(\cdot)_{jk}$ similarly to the matrix $\cP$.

%%%%%%%%%%%%%%%%%%%%%%%%%%%%%%%%%%%%%%%%%%%%%%%%%%%%%%%%%%%%%%%%%%%%%%%%%%%%%%%
\begin{thm}
\label{th:stat}
Suppose the plant energy matrix satisfies $K\succcurlyeq 0$, and the directly coupled observer is $\tau$-admissible in the sense of (\ref{taugood}). Then the observer is a stationary point of the CQF problem (\ref{cZ})--(\ref{ES}) if and only if the Hankelian $\cE$ in (\ref{cE}) % and the controllability Gramian $\cP$ in (\ref{cP})
satisfies
\begin{align}
\label{dcZdL0}
    \Theta_1 \cE_{12}
    -\cE_{21}^{\rT}\Theta_2
    & =\frac{\lambda}{2} \Pi L \cP_{22},\\
\label{dcZdM0}
    \Theta_2 \cE_{22}-\cE_{22}^{\rT}\Theta_2& =0.
\end{align}
\end{thm}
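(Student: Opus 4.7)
The plan is to cast $\cZ$ as a matrix trace that is explicit in the Gramians and then differentiate with respect to the free parameters $L$ and $M$. Since $\cC^\rT\cC$ is real symmetric and $\cX^\rT\cC^\rT\cC\cX$ is a scalar self-adjoint operator, one has
$$
\cZ \;=\; \bE_\tau\bigl(\cX^\rT\cC^\rT\cC\,\cX\bigr) \;=\; \Tr\bigl(\cC^\rT\cC\;\cP\bigr),
$$
with $\cP$ the controllability Gramian from (\ref{cP})--(\ref{cPALE}). Here $\cC$ depends on $L$ through its lower block $\sqrt{\lambda\Pi}L$ in (\ref{ES}), while $\cA$ depends on both $L$ and $M$ through the off-diagonal and lower-right blocks in (\ref{cA}); thus $\cP$ carries the dependence on $(L,M)$ only implicitly, via its defining ALE.

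To eliminate the implicit dependence, I would employ the standard adjoint-variable trick. Perturbing (\ref{cPALE}), multiplying by the observability Gramian $\cQ$ from (\ref{cQALE}), and applying cyclicity of the trace together with the $\cQ$-equation yields
$$
\Tr\bigl(\cC^\rT\cC\;\partial\cP\bigr) \;=\; 2\,\Tr\bigl(\cE^\rT\;\partial\cA\bigr),
$$
so the full Fr\'{e}chet derivative reads $\partial\cZ = \Tr\bigl(\partial(\cC^\rT\cC)\;\cP\bigr) + 2\,\Tr\bigl(\cE^\rT\,\partial\cA\bigr)$, with the Hankelian $\cE=\cQ\cP$ of (\ref{cE}) entering naturally. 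Note that $\partial\cA_\tau = \partial\cA$ because the shift $\frac{1}{2\tau}I_{n+\nu}$ in (\ref{cPALE}) is parameter-free.

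Specialising to $M\in\mS_\nu$, the matrix $\cC$ is inert while $\partial_M\cA$ has the single nonzero block $2\Theta_2\,\partial M$ in position $(2,2)$, giving $\partial_M\cZ = 4\,\Tr(\cE_{22}^\rT\Theta_2\,\partial M)$. The constraint $\partial M=(\partial M)^\rT$ forces the symmetric part of $\cE_{22}^\rT\Theta_2$ to vanish, and using $\Theta_2^\rT=-\Theta_2$ this collapses to $\cE_{22}^\rT\Theta_2-\Theta_2\cE_{22}=0$, which is (\ref{dcZdM0}). For $L\in\mR^{n\times\nu}$, the two off-diagonal blocks of $\partial_L\cA$ contribute $4\,\Tr\bigl((\cE_{12}^\rT\Theta_1-\Theta_2\cE_{21})\,\partial L\bigr)$ (using $\Theta_2^\rT=-\Theta_2$), while the direct variation of the $(2,2)$ block of $\cC^\rT\cC$ contributes $2\lambda\,\Tr(\cP_{22}L^\rT\Pi\,\partial L)$. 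Equating the sum to zero for all $\partial L$, taking transpose, and invoking $\Theta_1^\rT=-\Theta_1$ delivers (\ref{dcZdL0}).

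The main technical hazard is purely bookkeeping: keeping track of transposes, the asymmetry between $\cE=\cQ\cP$ and $\cE^\rT=\cP\cQ$, and the antisymmetry of the CCR matrices $\Theta_k$ (which is exactly what produces the minus sign between $\Theta_1\cE_{12}$ and $\cE_{21}^\rT\Theta_2$ in (\ref{dcZdL0})). The $\tau$-admissibility hypothesis (\ref{taugood}) is used solely to guarantee that $\cA_\tau$ is Hurwitz, so that $\cP$ and $\cQ$ and the perturbation argument are all well-defined; the condition $K\succcurlyeq 0$ merely ensures that the feasible set over which stationarity is sought is nonempty.
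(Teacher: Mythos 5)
Your proposal is correct and follows essentially the same route as the paper: representing $\cZ=\bra\cC^{\rT}\cC,\cP\ket$, handling the implicit dependence on $(L,M)$ through the Gramian ALEs via the adjoint identity $\bL(\cA_{\tau},\cdot)^{\dagger}=\bL(\cA_{\tau}^{\rT},\cdot)$, and reading off the block-wise Fr\'echet derivatives (with the symmetric-variation argument for $M$), which yields exactly (\ref{dcZdL0}) and (\ref{dcZdM0}). The only cosmetic difference is that you perturb the $\cP$-equation and test against $\cQ$, whereas the paper perturbs $\cQ$ in the dual representation $\cZ=\frac{1}{\tau}\bra\cQ,\Sigma\ket$; both give the same intermediate expression $\delta\cZ=2\bra\cE,\delta\cA\ket+2\bra\cC\cP,\delta\cC\ket$.
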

%%%%%%%%%%%%%%%%%%%%%%%%%%%%%%%%%%%%%%%%%%%%%%%%%%%%%%%%%%%%%%%%%%%%%%%%%%%%%%%
\begin{proof}
By substituting (\ref{ZE}) into (\ref{cZ}), the cost functional is expressed in terms of the matrix $\cP$ from (\ref{cP}) as
\begin{equation}\label{ZP}
  \cZ = \bra \cC^{\rT}\cC, \bE_{\tau}(\cX\cX^{\rT})\ket = \bra \cC^{\rT}\cC, \cP\ket,
\end{equation}
where $\bra\cdot, \cdot\ket$ denotes the Frobenius inner product of matrices.
By using (\ref{cPALE}) and the adjoint $\bL(\cA_{\tau}, \cdot)^{\dagger} = \bL(\cA_{\tau}^{\rT}, \cdot)$ of the operator (\ref{ILO}), the cost $\cZ$ in (\ref{ZP}) is represented in terms of the observability Gramian $\cQ$ from (\ref{cQALE}) as
\begin{equation}
\label{ZQ}
    \cZ
    =
    \frac{1}{\tau}
    \Bra
        \cC^{\rT}\cC,
        \bL(\cA_{\tau}, \Sigma)
    \Ket
    =
    \frac{1}{\tau}
    \Bra
        \bL(\cA_{\tau}^{\rT}, \cC^{\rT}\cC),
        \Sigma
    \Ket
    =
    \frac{1}{\tau}
    \bra \cQ, \Sigma\ket.
\end{equation}
Since the matrix $\cA_{\tau}$ in (\ref{cPALE}) is Hurwitz due to the $\tau$-ad\-missibi\-li\-ty constraint (\ref{taugood}), the representation (\ref{ZQ}) shows that $\cZ$ inherits  a smooth dependence on  $L$ and $M$ from $\cQ$. The composite function $(L,M)\mapsto (\cA, \cC)\mapsto \cQ$ has the first variation
\begin{equation}
\label{dcQ}
    \delta \cQ = \bL(\cA_{\tau}^{\rT}, \,(\delta \cA)^{\rT} \cQ + \cQ \delta \cA + (\delta \cC)^{\rT}\cC + \cC^{\rT} \delta \cC),\!
\end{equation}
where use is made of the ALE in (\ref{cQALE}), and  the first variations of the matrices $\cA$ in (\ref{cA}) and $\cC$ in (\ref{ES}) with respect to $L$ and $M$ are
\begin{equation}
\label{dcAdcC}
    \delta\cA = 2\Theta {\begin{bmatrix} 0 & \delta L \\ \delta L^{\rT} & \delta M\end{bmatrix}},
    \qquad
    \delta\cC = {\begin{bmatrix} 0 & 0\\ 0 & \sqrt{\lambda \Pi}\delta L\end{bmatrix}}.
\end{equation}
The first variation of $\cZ$ in (\ref{ZQ}) can now be computed by combining the duality argument above with (\ref{dcQ}) and (\ref{dcAdcC}) as
\begin{align}
\nonumber
\delta\cZ
    =&
    \frac{1}{\tau}
    \Bra
        \bL(\cA_{\tau}^{\rT}, (\delta \cA)^{\rT} \cQ + \cQ \delta \cA + (\delta \cC)^{\rT}\cC + \cC^{\rT} \delta \cC),
        \Sigma
    \Ket\\
\nonumber
    =&
    \Bra
        (\delta \cA)^{\rT} \cQ + \cQ \delta \cA + (\delta \cC)^{\rT}\cC + \cC^{\rT} \delta \cC,
        \cP
    \Ket\\
\nonumber
    =&
    2\Bra
        \cE, \delta \cA
    \Ket
    +
    2
    \Bra
        \cC\cP, \delta \cC
    \Ket    \\
\nonumber
    =&
    -
    4
    \Bra
     \Theta \cE,
     { \begin{bmatrix} 0 & \delta L \\ \delta L^{\rT} & \delta M\end{bmatrix}}
    \Ket
    +
    2 \Bra \cC\cP, { \begin{bmatrix} 0 & 0\\ 0 & \sqrt{\lambda \Pi} \delta L\end{bmatrix}}\Ket\\
\nonumber
    =&
    -
    4
    \Bra
     \bS(\Theta \cE),
     { \begin{bmatrix} 0 & \delta L \\ \delta L^{\rT} & \delta M\end{bmatrix}}
    \Ket
    +
    2 \Bra (\cC\cP)_{22}, \sqrt{\lambda \Pi}\delta L\Ket\\
\nonumber
    =&
    -8\Bra
        \bS(\Theta \cE)_{12},
        \delta L
    \Ket
    -4\Bra
        \bS(\Theta \cE)_{22},
        \delta M
    \Ket\\
\nonumber
    & +
2 \bra \sqrt{\lambda \Pi} L\cP_{22}, \sqrt{\lambda \Pi}\delta L\ket\\
\label{delcZ}
    =&
2 \Bra \lambda \Pi L\cP_{22} - 4\bS(\Theta \cE)_{12}, \delta L\Ket
     -
    4
    \Bra
     \bS(\Theta \cE)_{22},
     \delta M
    \Ket
\end{align}
(see, for example, \cite{VP_2013a} for similar calculations). Here,  $\bS(N):= \frac{1}{2}(N+N^{\rT})$ denotes the symmetrizer of matrices, so that
\begin{equation}
\label{bSTE}
    \bS(\Theta \cE)
      =
    \frac{1}{2}
    (\Theta \cE - \cE^{\rT}\Theta)
      =
    \frac{1}{2}
    {\begin{bmatrix}
        \Theta_1 \cE_{11}-\cE_{11}^{\rT}\Theta_1  &
        \Theta_1 \cE_{12}-\cE_{21}^{\rT}\Theta_2\\
        \Theta_2 \cE_{21}-\cE_{12}^{\rT}\Theta_1 &
        \Theta_2 \cE_{22}-\cE_{22}^{\rT}\Theta_2\end{bmatrix}}.
\end{equation}
A combination of (\ref{delcZ}) with (\ref{bSTE}) leads to the  partial Frechet derivatives of $\cZ$ on the Hilbert spaces $\mR^{n\x \nu}$ and $\mS_{\nu}$:
\begin{align}
\label{dcZdL}
    \d_L \cZ
    & =
    2(\lambda \Pi L\cP_{22} - 4\bS(\Theta \cE)_{12})
      = 2 (\lambda \Pi L\cP_{22} -
    2
    (\Theta_1 \cE_{12}-\cE_{21}^{\rT}\Theta_2)),\\
\label{dcZdM}
    \d_M \cZ
    & = -4\bS(\Theta_2 \cE_{22})
     = -2(\Theta_2 \cE_{22}-\cE_{22}^{\rT}\Theta_2).
\end{align}
By equating the Frechet derivatives (\ref{dcZdL}) and (\ref{dcZdM}) to zero, the stationarity of $\cZ$ with respect to $L$ and $M$ is equivalent to (\ref{dcZdL0}) and (\ref{dcZdM0}). \end{proof}
%%%%%%%%%%%%%%%%%%%%%%%%%%%%%%%%%%%%%%%%%%%%%%%%%%%%%%%%%%%%%%%%%%%%%%%%%%%%%%%

The relation (\ref{bSTE}) implies that  the fulfillment of the first-order optimality conditions (\ref{dcZdL0}) and (\ref{dcZdM0}) for the observer is equivalent to the existence of a matrix $N\in \mS_n$ such that
\begin{equation}
\label{STE}
    \Theta \cE - \cE^{\rT} \Theta
    =
    \frac{1}{2}
    {\begin{bmatrix}
    N & \lambda \Pi L\cP_{22}\\
    \lambda \cP_{22}L^{\rT}\Pi & 0
    \end{bmatrix}}.
\end{equation}
Here, the zero block corresponds to (\ref{dcZdM0}) whereby the matrix $\cE_{22}$ is skew-Hamiltonian in the sense that $\cE_{22} \in \Theta_2^{-1} \mA_{\nu}$. If $\cP_{22}\succ 0$, then, in view of the assumption $\Pi\succ 0$,  (\ref{dcZdL0}) implies that the optimal coupling matrix can be represented as
\begin{equation}
\label{L}
    L
    =
    \frac{2}{\lambda}
    \Pi^{-1}
    (\Theta_1 \cE_{12}
    -\cE_{21}^{\rT}\Theta_2)
    \cP_{22}^{-1}.
\end{equation}
Together with the ALEs (\ref{cPALE}), (\ref{cQALE}) and the relation (\ref{L}),
the optimality condition (\ref{dcZdM0}) for the observer energy matrix $M$ also admits a more explicit form. This step is less straightforward and is  considered in the next two sections.

%%%%%%%%%%%%%%%%%%%%%%%%%%%%%%%%%%%%%%%%%%%%%%%%%%%%%%%%%%%%%%%%%%%%%%%%%%%%%%%%
\section{LIE-ALGEBRAIC REPRESENTATION OF THE OPTIMALITY CONDITIONS}\label{sec:lie}
%%%%%%%%%%%%%%%%%%%%%%%%%%%%%%%%%%%%%%%%%%%%%%%%%%%%%%%%%%%%%%%%%%%%%%%%%%%%%%%%

Associated with the Gramians $\cP$ and $\cQ$ from (\ref{cP}) and (\ref{cQALE}) are the matrices
\begin{equation}
\label{PQ}
    P:= \cP \Theta^{-1},
    \qquad
    Q:= \Theta \cQ
\end{equation}
which belong to the same subspace $\Theta \mS_{n+\nu}$ of Hamiltonian matrices as $\cA$ in (\ref{cA}). The property $P \in \Theta \mS_{n+\nu}$ follows from $\Theta^{-1}\cP \Theta^{-1}\in \mS_{n+\nu}$.  The linear space $\Theta \mS_{n+\nu}$, equipped with the commutator $[\cdot, \cdot]$ (as a non-associative  antisymmetric multiplication satisfying the Jacobi identity), forms a Lie algebra \cite{D_2006}.

%%%%%%%%%%%%%%%%%%%%%%%%%%%%%%%%%%%%%%%%%%%%%%%%%%%%%%%%%%%%%%%%%%%%%%%%%%%%%%%%
\begin{lem}
\label{lem:lie}
The ALEs (\ref{cPALE}), (\ref{cQALE}) and the optimality conditions (\ref{dcZdL0}), (\ref{dcZdM0}) for the CQF problem (\ref{cZ})--(\ref{ES}) admit a Lie-algebraic form in terms of the matrices $P$, $Q$ from (\ref{PQ}):% as
\begin{align}
\label{cPQALElie1_cPQALElie2}
    [\cA, P] &= \frac{1}{\tau}(P-\Sigma\Theta^{-1}),
    \qquad
    [\cA, Q]  = \Theta \cC^{\rT}\cC - \frac{1}{\tau} Q,\\
\label{optlie1_optlie2}
    [Q,P]_{12}
    & =
    \frac{\lambda}{2}
    \Pi LP_{22},
    \qquad\quad\ \
    [Q,P]_{22}
     =
    0,
\end{align}
where $[Q,P]_{12}$ and $[Q,P]_{22}$ are the corresponding blocks of the Hamiltonian matrix $[Q,P]\in \Theta  \mS_{n+\nu}$.
\end{lem}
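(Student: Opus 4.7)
}

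The plan is to recast each of the four matrix identities by ``twisting'' $\cP$ and $\cQ$ with the CCR matrix $\Theta$, exploiting the fact that $\cA\in\Theta\mS_{n+\nu}$ is Hamiltonian, which is equivalent to the two identities $\cA^{\rT}\Theta^{-1}=-\Theta^{-1}\cA$ and $\Theta \cA^{\rT}=-\cA\Theta$ obtained from $\cA\Theta+\Theta\cA^{\rT}=0$. I first record that $P=\cP\Theta^{-1}$ and $Q=\Theta\cQ$ both belong to $\Theta\mS_{n+\nu}$, since $\cP,\cQ\in\mS_{n+\nu}$ and $\Theta\in\mA_{n+\nu}$, and that the commutator of two elements of $\Theta\mS_{n+\nu}$ stays in $\Theta\mS_{n+\nu}$ by a short calculation using $\Theta^{\rT}=-\Theta$; this justifies the concluding assertion $[Q,P]\in\Theta\mS_{n+\nu}$.

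For the first equation in (\ref{cPQALElie1_cPQALElie2}) I start from the ALE $\cA_{\tau}\cP+\cP\cA_{\tau}^{\rT}+\tfrac{1}{\tau}\Sigma=0$ implicit in (\ref{cPALE}), right-multiply by $\Theta^{-1}$, and use $\cA_{\tau}^{\rT}\Theta^{-1}=-\Theta^{-1}(\cA+\tfrac{1}{2\tau}I_{n+\nu})$. The two $\tfrac{1}{2\tau}$ terms combine into $-\tfrac{1}{\tau}P$, yielding $[\cA,P]=\tfrac{1}{\tau}(P-\Sigma\Theta^{-1})$. Symmetrically, the ALE (\ref{cQALE}) for $\cQ$ is left-multiplied by $\Theta$, and $\Theta\cA_{\tau}^{\rT}=-(\cA+\tfrac{1}{2\tau}I_{n+\nu})\Theta$ gives $[\cA,Q]=\Theta\cC^{\rT}\cC-\tfrac{1}{\tau}Q$.

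For the stationarity conditions in (\ref{optlie1_optlie2}) I compute directly
\[
  [Q,P] \;=\; \Theta\cQ\cP\Theta^{-1} - \cP\cQ \;=\; \Theta\cE\Theta^{-1} - \cE^{\rT},
\]
using $\cE=\cQ\cP$ and $\cE^{\rT}=\cP\cQ$ (from the symmetry of $\cP$ and $\cQ$). Right-multiplying by $\Theta$ gives $[Q,P]\Theta=\Theta\cE-\cE^{\rT}\Theta=2\bS(\Theta\cE)$, whose block structure is spelled out in (\ref{bSTE}) using $\Theta=\diag(\Theta_1,\Theta_2)$. Reading off the $(1,2)$ block gives $[Q,P]_{12}\Theta_2=\Theta_1\cE_{12}-\cE_{21}^{\rT}\Theta_2$, and reading off the $(2,2)$ block gives $[Q,P]_{22}\Theta_2=\Theta_2\cE_{22}-\cE_{22}^{\rT}\Theta_2$. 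Substituting the optimality conditions (\ref{dcZdL0}) and (\ref{dcZdM0}) and then using $P_{22}=\cP_{22}\Theta_2^{-1}$ and the invertibility of $\Theta_2$ produces the claimed identities $[Q,P]_{12}=\tfrac{\lambda}{2}\Pi L P_{22}$ and $[Q,P]_{22}=0$.

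The entire argument is essentially a sequence of conjugations by $\Theta$ and $\Theta^{-1}$, so no step is genuinely difficult; the one place where care is needed is the sign bookkeeping in transferring the $-\tfrac{1}{2\tau}I_{n+\nu}$ through $\Theta$ so that the two shift terms add coherently into $\pm\tfrac{1}{\tau}$ in the commutator form of the ALEs. Beyond that, the identification $\cE^{\rT}=\cP\cQ$ (which lets the non-commutative product $\cQ\cP$ be compared to its ``transpose'') is the only non-obvious manipulation.
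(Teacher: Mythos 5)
Your proposal is correct and follows essentially the same route as the paper's proof: using the Hamiltonian identity $\cA\Theta+\Theta\cA^{\rT}=0$ to convert the two ALEs into commutator form for $P=\cP\Theta^{-1}$ and $Q=\Theta\cQ$, and then identifying $[Q,P]\Theta=\Theta\cE-\cE^{\rT}\Theta$ so that the block form (\ref{bSTE}) together with (\ref{dcZdL0}), (\ref{dcZdM0}) and $P_{22}=\cP_{22}\Theta_2^{-1}$ yields (\ref{optlie1_optlie2}). The sign bookkeeping and the use of $\cE^{\rT}=\cP\cQ$ are exactly the manipulations in the paper, so nothing further is needed.
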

%%%%%%%%%%%%%%%%%%%%%%%%%%%%%%%%%%%%%%%%%%%%%%%%%%%%%%%%%%%%%%%%%%%%%%%%%%%%%%%%
\begin{proof}
The Hamiltonian structure of the matrix $\cA$  in (\ref{cA}) implies that $    \cA^{\rT} = -\Theta^{-1}\cA \Theta $. Hence, (\ref{cPALE}) and (\ref{PQ}) imply that
$$
    \cA_{\tau}\cP + \cP \cA_{\tau}^{\rT} = \Big( [\cA, P] - \frac{1}{\tau}P \Big)\Theta,
     \qquad
     \cA_{\tau}^{\rT}\cQ   + \cQ \cA_{\tau} = -\Theta^{-1}\Big([\cA, Q] + \frac{1}{\tau}Q\Big) .
$$
These relations lead to the Lie-algebraic representations (\ref{cPQALElie1_cPQALElie2}) for the ALEs (\ref{cPALE}), (\ref{cQALE}).
The symmetry of the Gramians $\cP$, $\cQ$ and a combination of (\ref{cE}) with (\ref{PQ}) imply that
$$
    \Theta \cE - \cE^{\rT} \Theta
     =
    (\Theta \cQ\cP\Theta^{-1} - \cP\Theta^{-1}\Theta\cQ) \Theta
      =
    [Q,P]\Theta
$$
for any $\tau$-admissible observer. By substituting this identity into (\ref{STE}) and using the relation $\cP_{22}\Theta_2^{-1} = P_{22}$, the optimality conditions (\ref{dcZdL0}) and (\ref{dcZdM0}) admit the Lie-algebraic representations (\ref{optlie1_optlie2}).
\end{proof}
%%%%%%%%%%%%%%%%%%%%%%%%%%%%%%%%%%%%%%%%%%%%%%%%%%%%%%%%%%%%%%%%%%%%%%%%%%%%%%%%%%%%%%%%%%%%%%%%%%%%%%%%%%%%%%%%%%%%%%%%%%

In view of (\ref{cPQALElie1_cPQALElie2}),
  $$  P     = (\cI - \tau \ad_{\cA})^{-1}(\Sigma\Theta^{-1}),
  \qquad
    Q     = \tau (\cI + \tau \ad_{\cA})^{-1}(\Theta \cC^{\rT}\cC),
  $$
where $\cI$ is the identity operator on the space $\Theta \mS_{n+\nu}$.
The resolvents  $(\cI \pm \tau \ad_{\cA})^{-1}$ are well-defined since the spectrum of the operator $\ad_{\cA}$ on $\Theta \mS_{n+\nu}$ is contained in the strip $\big\{z \in \mC:\, |\Re z|< \frac{1}{\tau}\big\}$ due to the $\tau$-ad\-mi\-ssi\-bi\-li\-ty (\ref{taugood}).

%%%%%%%%%%%%%%%%%%%%%%%%%%%%%%%%%%%%%%%%%%%%%%%%%%%%%%%%%%%%%%%%%%%%%%%%%%%%%%
\begin{lem}
\label{lem:LM}
The
optimal coupling matrix $L$ in (\ref{L}) can be expressed in terms of the matrices $P$ and $Q$ from (\ref{PQ}) as
\begin{align}
\label{Lopt}
    L
    & =
    \frac{2}{\lambda}
    \Pi^{-1}
    [Q,P]_{12}
    P_{22}^{-1},
\end{align}
provided $\cP_{22}\succ 0$. Furthermore, the optimal energy matrix $M$ of the observer satisfies
\begin{equation}
\label{eq12}
    \frac{1}{2}\Big(\frac{1}{\tau}[\Sigma \Theta^{-1}, Q]_{12}+[\Theta \cC^{\rT}\cC, P]_{12}\Big)
    + [Q,P]_{11}\Theta_1 L -
    \Theta_1
    K[Q,P]_{12}
      + [Q,P]_{12}\Theta_2 M = 0.
\end{equation}
\end{lem}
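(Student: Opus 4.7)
The first equation (\ref{Lopt}) is essentially a reformulation of (\ref{L}) in terms of $P$ rather than $\cP$. The plan is to observe that the block $P_{22}$ of $P = \cP\Theta^{-1}$ equals $\cP_{22}\Theta_2^{-1}$, which is invertible under the hypothesis $\cP_{22}\succ 0$ together with nonsingularity of $\Theta_2$. Then, since the first Lie-algebraic optimality condition from Lemma~\ref{lem:lie} reads $[Q,P]_{12} = \frac{\lambda}{2}\Pi L P_{22}$ with $\Pi\succ 0$, inverting $\Pi$ and $P_{22}$ gives (\ref{Lopt}) immediately.

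For the second equation (\ref{eq12}), the main idea is to apply the Jacobi identity in the Lie algebra $\Theta \mS_{n+\nu}$ to the triple $(\cA, Q, P)$ in the form
\begin{equation*}
  [\cA,[Q,P]] = [[\cA,Q],P] + [Q,[\cA,P]].
\end{equation*}
Substituting the two ALE identities $[\cA,P] = \frac{1}{\tau}(P-\Sigma\Theta^{-1})$ and $[\cA,Q] = \Theta\cC^{\rT}\cC - \frac{1}{\tau}Q$ from (\ref{cPQALElie1_cPQALElie2}), the cross terms $\pm\frac{1}{\tau}[Q,P]$ cancel, leaving
\begin{equation*}
  [\cA,[Q,P]] = [\Theta\cC^{\rT}\cC, P] + \frac{1}{\tau}[\Sigma\Theta^{-1}, Q].
\end{equation*}
This is the cleanest possible identity relating the commutator of the Gramians (which carries the optimality content) to the free data of the problem.

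Next, I would extract the $(1,2)$ block of the left-hand side using the block form of $\cA$ in (\ref{cA}). Writing $F:=[Q,P]$ and expanding, one finds
\begin{equation*}
  [\cA,F]_{12} = AF_{12} + BL\,F_{22} - F_{11}BL - F_{12}\alpha.
\end{equation*}
Here is where the second optimality condition $[Q,P]_{22}=0$ from (\ref{optlie1_optlie2}) enters crucially: it eliminates the $BL\,F_{22}$ term, which otherwise would depend on $M$ in a less transparent way. Substituting $A=2\Theta_1 K$, $B=2\Theta_1$, $\alpha=2\Theta_2 M$, dividing by $2$, and rearranging gives precisely (\ref{eq12}).

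The main technical obstacle I anticipate is only bookkeeping: keeping the block structure of the commutator straight and confirming that the Hamiltonian/skew-Hamiltonian structure of $\cA$, $P$, $Q$ (so that $[Q,P]\in\Theta\mS_{n+\nu}$) makes the Jacobi manipulation consistent inside the Lie algebra. Once the two ALEs are written in the commutator form of Lemma~\ref{lem:lie}, the derivation is essentially algebraic, and the vanishing of $[Q,P]_{22}$ together with the cancellation of the $\frac{1}{\tau}[Q,P]$ contributions from the Jacobi identity are what make the final equation explicit in $M$.
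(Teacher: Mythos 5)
Your proposal is correct and takes essentially the same route as the paper's proof: the first Lie-algebraic optimality condition in (\ref{optlie1_optlie2}) gives (\ref{Lopt}) at once, and the Jacobi identity (you use its Leibniz form, the paper its cyclic form) combined with the ALEs (\ref{cPQALElie1_cPQALElie2}), the cancellation of the $\frac{1}{\tau}[Q,P]$ terms, extraction of the $(1,2)$ block of $[\cA,[Q,P]]$, and the vanishing of $[Q,P]_{22}$ yields (\ref{eq12}) exactly as in the paper.
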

%%%%%%%%%%%%%%%%%%%%%%%%%%%%%%%%%%%%%%%%%%%%%%%%%%%%%%%%%%%%%%%%%%%%%%%%%%%%%%
\begin{proof}
The representation (\ref{Lopt}) follows directly from the first optimality condition in (\ref{optlie1_optlie2}) under the assumption $\cP_{22}\succ 0$. In order to establish (\ref{eq12}), we note that the  left-hand sides of (\ref{cPQALElie1_cPQALElie2}), (\ref{optlie1_optlie2}) involve pairwise commutators of the Hamiltonian matrices $\cA, P, Q \in \Theta \mS_{n+\nu}$. Application of the Jacobi identity \cite{D_2006} and the antisymmetry of the commutator leads to the relations
\begin{align}
\nonumber
0   & =
[[P,\cA],Q] +[[\cA,Q], P]+[[Q,P], \cA]\\
\nonumber
& =
\frac{1}{\tau}
[\Sigma\Theta^{-1}-P,Q]
+
\big[\Theta \cC^{\rT}\cC-\frac{1}{\tau} Q, P\big]
 +[[Q,P], \cA]\\
\label{jacob}
& =  \frac{1}{\tau}
[\Sigma\Theta^{-1},Q]
+
[\Theta \cC^{\rT}\cC, P]+[[Q,P], \cA]
\end{align}
for any $\tau$-admissible observer (the optimality conditions (\ref{optlie1_optlie2}) have not  been used here). By substituting the matrix $\cA$ from (\ref{cA}) into the right-hand side of (\ref{jacob}) and considering the $(\cdot)_{12}$ block of the resulting Hamiltonian matrix, it follows that
\begin{equation}
\label{jacob12}
    \frac{1}{\tau}[\Sigma \Theta^{-1}, Q]_{12}+[\Theta \cC^{\rT}\cC, P]_{12}
    + 2\big([Q,P]_{11}\Theta_1 L +[Q,P]_{12}\Theta_2 M
    -
    \Theta_1 (
    K[Q,P]_{12}+L[Q,P]_{22})\big) = 0.
\end{equation}
If the second optimality condition  in (\ref{optlie1_optlie2}) is satisfied, then the corresponding term in (\ref{jacob12}) vanishes, which leads to (\ref{eq12}).
\end{proof}
%%%%%%%%%%%%%%%%%%%%%%%%%%%%%%%%%%%%%%%%%%%%%%%%%%%%%%%%%%%%%%%%%%%%%%%%%%%%%%%

As can be seen from the proof of Lemma~\ref{lem:LM}, the relation (\ref{eq12}) holds for any $\tau$-admissible stationary point of the CQF problem regardless of the assumption $\cP_{22}\succ 0$.   Furthermore, (\ref{eq12}) is a linear equation with respect to $M$. This allows the optimal observer energy matrix  $M$ to be  expressed in terms of $P$, $Q$ from (\ref{PQ}) in the case of equal plant and observer dimensions under nondegeneracy conditions considered in the next section.

%%%%%%%%%%%%%%%%%%%%%%%%%%%%%%%%%%%%%%%%%%%%%%%%%%%%%%%%%%%%%%%%%%%%%%%%%%%%%%%
\section{THE CASE OF EQUALLY DIMENSIONED  PLANT AND OBSERVER}\label{sec:same}
%%%%%%%%%%%%%%%%%%%%%%%%%%%%%%%%%%%%%%%%%%%%%%%%%%%%%%%%%%%%%%%%%%%%%%%%%%%%%%%%

We will now consider observers which have the same dimension as the plant:  $\nu=n$. In this case, the  observer will be called \emph{nondegenerate} if the matrices  $P$ and  $Q$  from (\ref{PQ})  satisfy
\begin{equation}
\label{nondeg}
  \cP_{22}\succ 0,
  \qquad
  \det([Q,P]_{12})\ne 0.
\end{equation}
The results of Sections~\ref{sec:opt} and \ref{sec:lie} lead to the following necessary conditions of optimality for nondegenerate observers.

%%%%%%%%%%%%%%%%%%%%%%%%%%%%%%%%%%%%%%%%%%%%%%%%%%%%%%%%%%%%%%%%%%%%%%%%%%%%%%%%
\begin{thm}
\label{th:LM}
Suppose the plant and observer dimensions are equal: $n=\nu$. Then for any nondegenerate observer, which  is a stationary point of the CQF problem (\ref{cZ})--(\ref{ES}) under the assumptions of Theorem~\ref{th:stat}, the coupling and energy matrices are related by (\ref{Lopt}) and by
\begin{equation}
\label{Mopt}
    M =
    \Theta_2^{-1}([Q,P]_{12})^{-1}\Big(\Theta_1 K[Q,P]_{12}
    - [Q,P]_{11}\Theta_1 L- \frac{1}{2}\Big(\frac{1}{\tau}[\Sigma \Theta^{-1}, Q]_{12}+[\Theta \cC^{\rT}\cC, P]_{12}\Big)\Big)
\end{equation}
to the matrices $P$, $Q$ from (\ref{PQ}) satisfying the ALEs (\ref{cPQALElie1_cPQALElie2}).
\end{thm}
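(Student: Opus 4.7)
The plan is to deduce this theorem as an essentially algebraic consequence of Lemma~\ref{lem:LM}, with the equality $n=\nu$ and the nondegeneracy condition (\ref{nondeg}) providing the invertibility required to solve for $M$ explicitly. No new variational calculation is needed: all the heavy lifting (Frechet differentiation of $\cZ$, the Jacobi identity, the block structure of the commutator $[Q,P]$) has already been done in Sections~\ref{sec:opt}~and~\ref{sec:lie}.

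First, since $\cP_{22}\succ 0$ is part of the nondegeneracy assumption (\ref{nondeg}), Lemma~\ref{lem:LM} directly gives the formula (\ref{Lopt}) for the optimal coupling matrix $L$. This disposes of one of the two conclusions and requires no further work.

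For the second conclusion, I recall the remark following Lemma~\ref{lem:LM}: identity (\ref{eq12}) holds at \emph{every} $\tau$-admissible stationary point of the CQF problem, regardless of whether $\cP_{22}$ is positive definite, and in particular regardless of the dimension of the observer. The derivation goes through the Jacobi identity applied to the Hamiltonian matrices $\cA,P,Q\in \Theta\mS_{n+\nu}$ together with the ALEs (\ref{cPQALElie1_cPQALElie2}) and the second optimality condition $[Q,P]_{22}=0$ from (\ref{optlie1_optlie2}); extracting the $(\cdot)_{12}$ block of the Hamiltonian matrix on the right-hand side of (\ref{jacob}) yields (\ref{eq12}). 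Thus, under the hypotheses of Theorem~\ref{th:LM}, the relation (\ref{eq12}) is available for free.

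Now comes the only remaining step, which is purely algebraic. When $\nu = n$, the block $[Q,P]_{12}$ is a square matrix of order $n$, and the second part of the nondegeneracy condition (\ref{nondeg}) asserts that $[Q,P]_{12}$ is invertible; also, $\Theta_2\in \mA_n$ is nonsingular by the standing assumption in (\ref{Theta}). Equation (\ref{eq12}) is linear in $M$ with leading term $[Q,P]_{12}\Theta_2 M$, so I isolate this term, premultiply by $([Q,P]_{12})^{-1}$ and then by $\Theta_2^{-1}$, which gives exactly (\ref{Mopt}).

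The main ``obstacle'' is merely notational: one must keep the block indices $11,12,22$ and the factors $\Theta_1,\Theta_2$ in the correct order so that the rearrangement of (\ref{eq12}) matches (\ref{Mopt}) term by term. There is no genuinely new mathematical content beyond Lemma~\ref{lem:LM} and the dimensional/invertibility hypotheses of the theorem.
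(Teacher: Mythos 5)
Your proposal is correct and follows essentially the same route as the paper's own proof: the first nondegeneracy condition $\cP_{22}\succ 0$ gives (\ref{Lopt}) via Lemma~\ref{lem:LM}, while (\ref{eq12}), valid at any $\tau$-admissible stationary point, is a linear equation in $M$ which the invertibility of $[Q,P]_{12}$ (second condition in (\ref{nondeg})) and of $\Theta_2$ allows to be solved uniquely in the form (\ref{Mopt}).
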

%%%%%%%%%%%%%%%%%%%%%%%%%%%%%%%%%%%%%%%%%%%%%%%%%%%%%%%%%%%%%%%%%%%%%%%%%%%%%%%%
\begin{proof}
The first of the conditions (\ref{nondeg}) makes the  previously obtained representation (\ref{Lopt}) applicable, which leads to a nonsingular coupling matrix $L$ in view of the second condition in (\ref{nondeg}). The latter allows (\ref{eq12}) to be 
uniquely solved for the observer energy matrix $M$ in the form (\ref{Mopt}). 
\end{proof}
%%%%%%%%%%%%%%%%%%%%%%%%%%%%%%%%%%%%%%%%%%%%%%%%%%%%%%%%%%%%%%%%%%%%%%%%%%%%%%%%

In combination with the ALEs (\ref{cPALE}) and (\ref{cQALE}) (or their Lie-algebraic form (\ref{PQ})--(\ref{cPQALElie1_cPQALElie2}), the relations (\ref{Lopt}) and (\ref{Mopt}) of Theorem~\ref{th:LM} provide a set of algebraic equations for finding the matrices $L$ and $M$ of a nondegenerate observer which is a stationary point in the CQF problem (\ref{cZ})--(\ref{ES}).

%%%%%%%%%%%%%%%%%%%%%%%%%%%%%%%%%%%%%%%%%%%%%%%%%%%%%%%%%%%%%%%%%%%%%%%%%%%%%%%
\section{CONCLUSION}\label{sec:conc}
%%%%%%%%%%%%%%%%%%%%%%%%%%%%%%%%%%%%%%%%%%%%%%%%%%%%%%%%%%%%%%%%%%%%%%%%%%%%%%%
%%%%%%%%%%%%%%%%%%%%%%%%%%%%%%%%%%%%%%%%%%%%%%%%%%%%%%%%%%%%%%%%%%%%%%%%%%%%%%%%

We have discussed the state-space and frequency-domain computation of discounted averages with exponentially decaying weights for moments of system variables in QHOs.  For a quantum plant and a quantum observer in the form of directly coupled  QHOs, we have considered a CQF problem of minimizing the discounted mean square value of the estimation error together with a penalty on the observer back-action.  First-order necessary conditions of optimality have been obtained for this problem in the form of two coupled ALEs. We have applied Lie-algebraic techniques  to representing this set of equations in a more explicit form. The existence and uniqueness of a solution for these coupled ALEs is a complicated open problem. A numerical solution of these equations can be based on a homotopy algorithm \cite{MB_1985} (see also \cite{VP_2010b}) and will be discussed  elsewhere.

%%%%%%%%%%%%%%%%%%%%%%%%%%%%%%%%%%%%%%%%%%%%%%%%%%%%%%%%%%%%%%%%%%%%%%%%%%%%%%%%
%===============================================================================

\end{document}